\newtheorem{theorem}{Theorem}
\newtheorem{Corollary}{Corollary}
\newtheorem{Definition}{Definition}
\newtheorem{lemma}{Lemma}
\theoremstyle{definition}
\def\BibTeX{{\rm B\kern-.05em{\sc i\kern-.025em b}\kern-.08em
    T\kern-.1667em\lower.7ex\hbox{E}\kern-.125emX}}
\begin{document}

\title{Joint Source-Channel Coding for ISAC: Distortion Tradeoffs and Separation Theorems}
%On the Optimality of Separate Source and Channel Coding for ISAC Systems}%Universal Coded Computing for Distributed Computation Systems

\author{\IEEEauthorblockN{Gefei Peng and Youlong Wu}  
	\IEEEauthorblockA{%School of Information Science and Technology\\ 
	ShanghaiTech University, Shanghai, China}
	\{penggf2025, wuyl1\}@shanghaitech.edu.cn
}

\maketitle
\begin{abstract} 
    Integrated Sensing and Communication (ISAC) systems have garnered significant attention due to their capability to simultaneously achieve efficient communication and environmental sensing. A core objective in this field is characterizing the performance tradeoff between sensing and communication.  In this paper, we consider a joint source-channel coding (JSCC) framework for the ISAC system that consists of a transmitter with a channel state estimator and a joint source-channel encoder, a state-dependent memoryless channel, and a receiver with a joint source-channel decoder.   From an information-theoretic perspective, we establish the tradeoff relationships among channel capacity, distortions in both communication and sensing processes, and the estimation cost. We prove that the separate source and channel coding can achieve joint optimality in this setting. An illustrative example of a binary setting is also provided to validate our theoretical results. 
\end{abstract} %In the proposed model of the ISAC system with JSCC framework, we consider a state-dependent memoryless channel (SDMC), the transmitter includes a channel state estimator and a joint source-channel encoder, the receiver includes a joint source-channel decoder. a source sequence is sent via a JSCC encoder to a receiver within a given distortion, and the transmitter estimates the channel state based on the feedback signal under a given distortion.  We are aiming to characterize the fundamental tradeoff between the channel capacity,  communication costs, sensing distortion, and estimation cost.
 
\begin{IEEEkeywords}
  Communication, sensing, joint source-channel coding, rate-distortion
  \end{IEEEkeywords}

\section{Introduction}
Integrated sensing and communications (ISAC) integrates communication and sensing within a common wireless system, thereby improving resource utilization, reducing infrastructure overhead, and enhancing the performance of both tasks \cite{liu2022integrated, liu2022survey}. This integration promises significant benefits, such as improved spectral efficiency, lower hardware costs, and enhanced system capabilities for emerging applications \cite{gunduz2024joint}. Extensive research efforts are underway and have led to notable progress in various directions, including ISAC waveform design \cite{xiao2022waveform, zhang2025joint}, transceiver design \cite{chen2022generalized, wang2024fluid}, and integrated circuit design \cite{mannem2022mm, mannem2023reconfigurable}.

%Nevertheless, many fundamental theoretical challenges in ISAC remain unresolved. Among these, the tradeoff between communication and sensing performance is a pivotal issue \cite{liu2022survey}. Revealing the underlying mechanism governing this tradeoff is crucial for guiding the design and analysis of practical ISAC systems. Several studies have analyzed the performance tradeoff, with a core approach focusing on the information-theoretic limits achievable by the communication and sensing subsystems. This involves defining appropriate performance metrics for both functionalities and characterizing their achievable region. By examining the shape of the boundary of this achievable region, the nature of the performance tradeoff can be investigated.

A key objective of an ISAC performance limit theory is precisely to characterize the tradeoff between communication and sensing performance. %The fundamental method involves describing the optimal Pareto frontier between the performance metrics for communication and sensing. The most basic form of this Pareto frontier is given by the capacity-distortion boundary \cite{sutivong2001tradeoff}, an approach inspired by the classical rate-distortion theory in information theory \cite{cover1999elements}.
Kobayashi \textit{et al.} \cite{kobayashi2018joint, kobayashi2019joint} extended the capacity-distortion framework to a delayed feedback state-dependent memoryless channel (SDMC) model of an ISAC system. For the single-input single-output (SISO) case, they demonstrated that the capacity-distortion boundary could be computed and plotted numerically using the Blahut-Arimoto algorithm \cite{kobayashi2018joint}. Ahmadipour \textit{et al.} \cite{ahmadipour2021joint, ahmadipour2022information} further generalized the capacity-distortion framework to a delayed feedback state-dependent memoryless broadcast channel model of an ISAC system, thereby initiating the first information-theoretic study on joint sensing and communication. Recent research has focused on deriving upper and lower bounds for the capacity-distortion function in bistatic ISAC systems models, which refers to scenarios where the transmitter and the sensing receiver are physically separated \cite{liu2024bistatic, jiao2025information}. Furthermore, considering scenarios with untrusted sensing nodes, Gong \textit{et al.} \cite{gong2024secrecy} introduced the secrecy rate-distortion tradeoff problem.

However, the aforementioned schemes primarily consider ISAC problems from a channel coding perspective. In this paper, we consider the JSCC framework for the ISAC system that consists of a transmitter with a channel state estimator and a joint source-channel encoder, a state-dependent memoryless channel, and a receiver with a joint source-channel decoder. Our goal is to characterize the fundamental tradeoff between communication and sensing performance in such systems and to verify whether the separation-based source/channel coding (SSCC) scheme can achieve joint optimality in this context. Our contributions are summarized as follows:
We introduce the joint source-channel coding problem for ISAC and characterize the capacity-distortion-cost tradeoff for a state-dependent memoryless channel in Theorem 1. Furthermore, we prove that separate source and channel coding can achieve joint optimality in this setting. Binary examples are provided to validate the correctness of the proposed theorem.

The remainder of this paper is organized as follows. Section II introduces the system model and problem formulation. Section III presents the main results, including the capacity-distortion function and the proposed coding scheme. Section IV provides the capacity-rate-distortion tradeoff for the binary setting. Finally, Section V concludes the paper and outlines future research directions.

\textbf{Notations:}  We use calligraphic letters to denote sets, e.g., $\mathcal{X}$. The set of real numbers and the set of non-negative real numbers are denoted by \(\mathbb{R}\) and \(\mathbb{R}_0^+\), respectively. Random variables are denoted by uppercase letters, e.g., $X$, and their realizations by lowercase letters, e.g., $X$. The notation $[1:X]$ is used to denote the set $\{1,\cdots,X\}$. The notation  $X^n$  denotes the tuple of random variables ($X_1,\cdots,X_n)$. We abbreviate \emph{independent~and~identically~distributed} as \emph{i.i.d.}. The probability distribution of $X$ is denoted as $P_X(x)= \text{Pr[}X = x]$. Logarithms are taken with respect to base 2.
% Given an event $\mathscr{E}$, we use $\mathds{1}(\mathscr{E})$ as an indicator,   i.e., equals to 1 if $\mathscr{E}$ happens, and  0 otherwise.

\section{System Model}

\begin{figure}[t]
    \centering
    \includegraphics[width=0.5\textwidth]{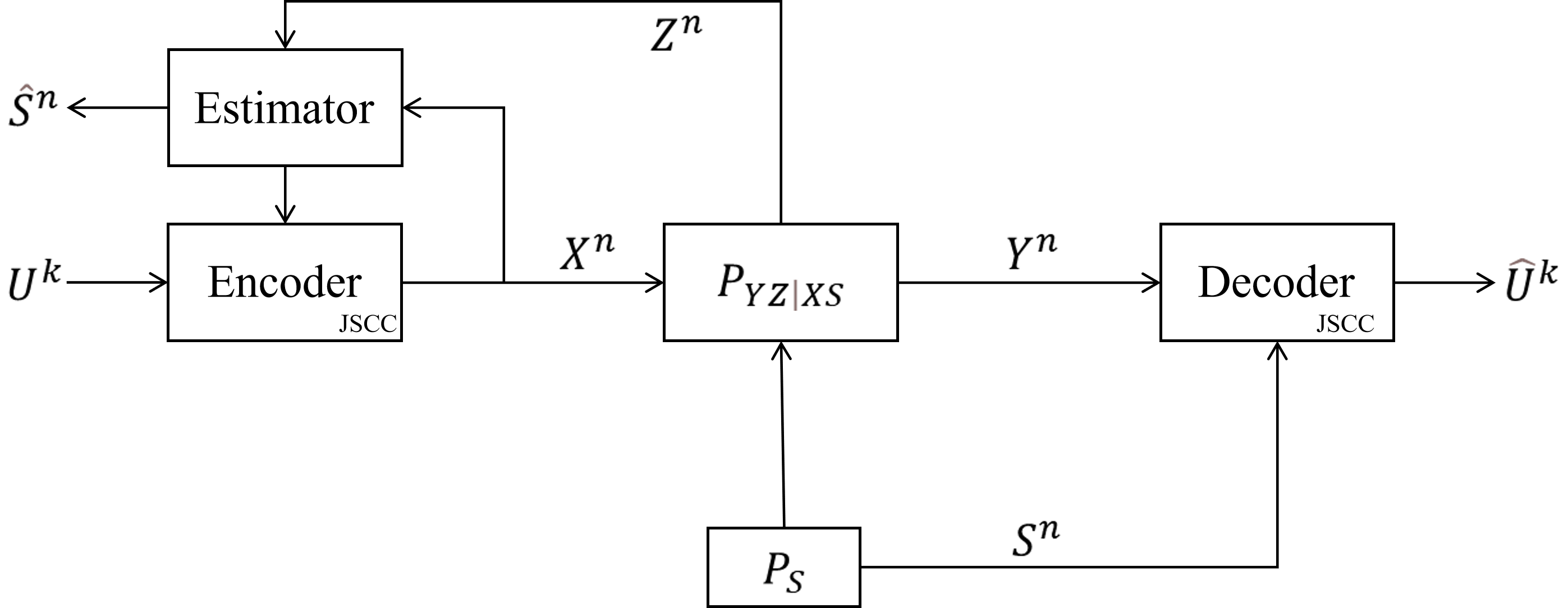}
    \caption{The JSCC framework for the ISAC problem.}
    \label{fig:sys}
\end{figure}

Consider a JSCC framwork of the ISAC system depicted in Fig.~\ref{fig:sys}. Here, the transmitter intends to communicate $k$ symbols of an uncompressed source $U^k$ over an SDMC in $n$ channel uses. The receiver wishes to reconstruct the source symbols with a specified communication distortion $D_u\geq 0$, while the transmitter simultaneously estimates the channel state with a prescribed sensing distortion $D_s\geq 0$ using the generalized feedback.

The source sequence $U^k$ is generated i.i.d. according to a given distribution $P_U(\cdot)$ and is transmitted over an SDMC. The receiver observes the channel output $Y^n$, and a feedback signal $Z^n$ is generated according to the fixed channel law $P_{YZ|XS}(\cdot, \cdot | x^n, s^n)$. Given the time-$i$ channel input $X_i = x_i$ and state realization $S_i = s_i$, the output is independent of past inputs, outputs, and states. Except for certain Gaussian cases, we assume the channel state $S_i$, input $X_i$, output $Y_i$, and feedback signal $Z_i$ take values in finite sets $\mathcal{S}$, $\mathcal{X}$, $\mathcal{Y}$, and $\mathcal{Z}$, respectively. The state sequence $\{ S_i \}_{i=1}^n$ follows a given distribution $P_S(\cdot)$ and is perfectly known at the receiver.  Let $\hat{S}^n := (\hat{S}_1, \cdots, \hat{S}_n) = h(X^n, Z^n)$ denote the state estimate at the transmitter, and $\hat{U}^k = g(S^n, Y^n)$ denote the decoded message at the receiver.

In the ISAC system, a $(|\mathcal{U}|^k, n)$ joint source-channel code of rate $\gamma = k/n$ consists of

1) an encoder at the transmitter that assigns the channel input $x_i(u^k, z^{i-1}) \in \mathcal{X}$ for $i=1,\ldots,n$, to each sequence $u^k \in \mathcal{U}^k$;

2) a decoder at the communication receiver that produces an estimate $\hat{u}^k(y^n, s^n) \in \hat{\mathcal{U}}^k$ upon observing $y^n \in \mathcal{Y}^n$;

3) a state estimator at the sensing receiver that produces an estimated state sequence $\hat{s}^n(x^n, z^n) \in {\mathcal{S}}^n$ upon observing $z^n \in \mathcal{Z}^n$.

%The channel input is obtained via the joint encoding function $X^n = \phi(U^k, Z^{n-1})$.  The corresponding channel output $Y_i$ and feedback signal $Z_i$ at time $i$ are generated from the input $X_i$ and state $S_i$ according to the SDMC transition law $P_{YZ|XS}$.

The sensing and communication performance is measured by the expected average per-block distortion
\begin{IEEEeqnarray}{rCl}
   \Delta_s^{(n)} &:= \mathbb{E}[d(S^n, \hat{S}^n)] = \frac{1}{n}\sum_{i=1}^n \mathbb{E}[d(S_i, \hat{S}_i)], \label{eq:3.1a} \\
    \Delta_u^{(k)} &:= \mathbb{E}[d(U^k, \hat{U}^k)] = \frac{1}{k}\sum_{i=1}^k \mathbb{E}[d(U_i, \hat{U}_i)], \label{eq:3.1b}
\end{IEEEeqnarray}
where $d:\mathcal{S} \times \hat{\mathcal{S}} \mapsto \mathbb{R}_0^+$ and $d: \mathcal{U} \times \hat{\mathcal{U}} \mapsto \mathbb{R}_0^+$ are given bounded distortion functions:
\begin{align}
\max_{(s, \hat{s}) \in \mathcal{S} \times \hat{\mathcal{S}}} d(s, \hat{s})  < \infty,  ~
\max_{(u, \hat{u}) \in \mathcal{U} \times \hat{\mathcal{U}}} d(u, \hat{u})  < \infty.  
\end{align}

In practical communication systems, constraints on the expected cost of the channel input are often imposed, such as average or peak power constraints. These constraints can usually be expressed as $\mathbb{E}[b(X^n)] = \frac{1}{n} \sum_{i=1}^n \mathbb{E}[b(X_i)],$ for a given cost function \( b: X \mapsto \mathbb{R}_0^+ \).
\begin{Definition}
    A rate--distortion--cost tuple \( (\gamma, D_u, D_s, B) \) is said to be achievable if there exists a sequence of $(|\mathcal{U}|^k, n)$ codes of rate $\gamma = k/n$ that simultaneously satisfy
    \begin{subequations}\label{eq:achievable_conditions}
        \begin{align}
         &   \lim_{n \to \infty} \Delta_u^{(k)}  \leq D_u, ~~~\lim_{n \to \infty} \Delta_s^{(n)}  \leq D_s \label{eq:state_dist}, \\
        &    \lim_{n \to \infty} \frac{1}{n} \sum_{i=1}^n \mathbb{E}[b(X_i)]  \leq B. \label{eq:cost_constraint}
        \end{align}
    \end{subequations}
\end{Definition}

The main result of this section is an exact characterization of the capacity--distortion--cost function \( C(D_s, B) \). We begin by describing the optimal estimator \( h \), which is independent of the choice of encoding and decoding functions, and operates on a symbol-by-symbol basis. Specifically, the estimator computes the estimate \( \hat{S}_i \) solely as a function of the current input \( X_i \) and feedback signal \( Z_i \), without depending on other inputs or feedback signals.

\begin{lemma}\label{Lemma1}
Define the function
\begin{equation}
\hat{s}^*(x, z) := \arg\min_{s' \in \mathcal{S}} \sum_{s \in \mathcal{S}} P_{S|XZ}(s|x, z) d(s, s'), \label{eq:sensing_estimator}
\end{equation}
where ties can be broken arbitrarily and
\begin{equation}
P_{S|XZ}(s|x, z) = \frac{P_S(s)P_{Z|SX}(z|s, x)}{\sum_{\bar{s} \in \mathcal{S}} P_S(\bar{s})P_{Z|SX}(\bar{z}| \bar{s}, x)}. \label{eq:prob}
\end{equation}
Irrespective of the choice of encoding and decoding functions, distortion $\Delta_s^{(n)}$ in \eqref{eq:state_dist} is minimized by the estimator
\begin{equation}
h^*(x^n, z^n) := \bigl(\hat{s}^*(x_1, z_1), \hat{s}^*(x_2, z_2), \ldots, \hat{s}^*(x_n, z_n)\bigr). \label{eq:estimator}
\end{equation}
Notice that the function $\hat{s}^*(\cdot, \cdot)$ only depends on the SDMC channel law $P_{YZ|SX}$ and the state distribution $P_S$.
\end{lemma}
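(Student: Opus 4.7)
The approach is to minimize the per-block sensing distortion $\Delta_s^{(n)}$ term by term. Since $\Delta_s^{(n)}=\frac{1}{n}\sum_{i=1}^n \mathbb{E}[d(S_i,\hat{S}_i)]$ is a sum of individual expectations and the estimator is free to assign each $\hat{S}_i$ as an arbitrary function of $(X^n,Z^n)$, it suffices to minimize $\mathbb{E}[d(S_i,\hat{S}_i)]$ separately for every $i$. By the tower property, $\mathbb{E}[d(S_i,\hat{S}_i)]=\mathbb{E}\bigl[\mathbb{E}[d(S_i,\hat{S}_i)\mid X^n,Z^n]\bigr]$, and since $\hat{S}_i$ is $(X^n,Z^n)$-measurable, the inner conditional expectation is pointwise minimized by the Bayes rule
\begin{equation*}
h_i^\star(x^n,z^n)=\arg\min_{s'\in\mathcal{S}}\sum_{s\in\mathcal{S}}P_{S_i\mid X^n Z^n}(s\mid x^n,z^n)\,d(s,s').
\end{equation*}

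The heart of the proof is to show that this conditional distribution collapses to a function of $(x_i,z_i)$ alone, i.e.\ to establish the Markov chain $S_i-(X_i,Z_i)-(X^{n\setminus i},Z^{n\setminus i})$. I would carry this out by writing the joint distribution of $(U^k,S^n,X^n,Y^n,Z^n)$ from its four constituent pieces: the i.i.d.\ source law $\prod_j P_U(u_j)$, the i.i.d.\ state law $\prod_j P_S(s_j)$, the deterministic encoder map $x_j=x_j(u^k,z^{j-1})$, and the memoryless channel law $\prod_j P_{YZ\mid XS}(y_j,z_j\mid x_j,s_j)$. Marginalizing $y^n$ reduces the channel factor to $\prod_j P_{Z\mid SX}(z_j\mid s_j,x_j)$, and then marginalizing over $u^k$ and $\{s_j\}_{j\neq i}$ isolates the $s_i$-dependence entirely into the single factor $P_S(s_i)\,P_{Z\mid SX}(z_i\mid s_i,x_i)$, with the remaining quantity depending only on $(x^n,z^n)$. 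Bayes' rule then yields exactly \eqref{eq:prob}, proving the Markov chain and collapsing $h_i^\star(x^n,z^n)$ to $\hat{s}^\star(x_i,z_i)$ of \eqref{eq:sensing_estimator}.

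The main subtlety is the feedback loop: because $X_j$ for $j>i$ depends on $Z^{j-1}$, which in turn carries information about past states $S^{j-1}$, one might worry that the later coordinates of $(X^n,Z^n)$ could leak extra information about $S_i$ and thereby enrich its posterior beyond $P_{S\mid XZ}(\cdot\mid x_i,z_i)$. The marginalization step above shows that this does not happen: once one conditions jointly on both $X^n$ and $Z^n$, the later-index factors depending on $(x_j,z_j,s_j)$ completely decouple from $s_i$, and the extra randomness from the source, encoder, and channel outputs is absorbed into the normalizing constant. This is the only place where the detailed encoder/channel structure is used; the rest is the standard Bayes-estimator optimality, and having identified $h_i^\star$ symbol-by-symbol as in \eqref{eq:estimator}, the lemma follows.
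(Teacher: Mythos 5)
Your proof is correct and takes essentially the same route as the argument this paper relies on (it omits the proof and cites Ahmadipour \emph{et al.}): term-by-term optimality of the Bayes estimator, followed by the factorization of the joint law $P(u^k)\prod_j P_S(s_j)\mathbf{1}\{x_j=x_j(u^k,z^{j-1})\}P_{Z|SX}(z_j|s_j,x_j)$ to show that $P_{S_i|X^nZ^n}$ collapses to \eqref{eq:prob}, which is exactly how the feedback subtlety is handled there as well. No gaps; your write-up is a faithful self-contained reconstruction of the omitted proof.
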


The proof of Lemma \ref{Lemma1} follows exactly as \cite{ahmadipour2022information} and is thus omitted.  The optimal state estimator is thus a symbolwise estimator directly applied to the sequences observed at the transmitter.
\begin{lemma}
    Define the function
    \begin{equation}
        \hat{u}^*(s, y) := \arg\min_{s' \in \mathcal{S}} \sum_{s \in \mathcal{S}} P_{U|YS}(u|y, s) d(u, u'), 
    \end{equation}
    when minimizing over parameters, if there exist multiple parameters that yield the same objective function value, any one of them can be chosen as the solution.
\end{lemma}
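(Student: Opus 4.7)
The plan is to mirror, on the source-reconstruction side, the derivation used for Lemma~\ref{Lemma1}. I would begin by expanding the per-symbol communication distortion using the tower property: for any decoder producing $\hat{U}_i$ as a function of $(Y^n, S^n)$,
\begin{equation}
\mathbb{E}\bigl[d(U_i, \hat{U}_i)\bigr] \;=\; \sum_{y,s} P_{Y_i S_i}(y, s)\, \mathbb{E}\!\left[ d(U_i, \hat{U}_i) \,\big|\, Y_i = y,\, S_i = s \right].
\end{equation}
Because every coefficient $P_{Y_i S_i}(y, s)$ is nonnegative, the outer sum is minimized by minimizing the inner conditional expectation pointwise at each $(y, s)$.

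Next, at a fixed $(y, s)$ the inner conditional expectation simplifies to the Bayes risk $\sum_{u} P_{U \mid Y S}(u \mid y, s)\, d(u, \hat{u})$, whose minimizer over $\hat{u} \in \hat{\mathcal{U}}$ is by definition the estimator $\hat{u}^*(s, y)$ in the statement, with ties resolved arbitrarily as indicated. Summing these pointwise minimizers over $i \in [1:k]$ gives a lower bound on $\Delta_u^{(k)}$, attained by the symbolwise decoder $g^*(y^n, s^n) := \bigl(\hat{u}^*(s_1, y_1), \ldots, \hat{u}^*(s_k, y_k)\bigr)$, so the stated estimator is the distortion-optimal symbolwise choice.

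The step I expect to require the most care — and where the analogy to Lemma~\ref{Lemma1} is not completely automatic — is justifying that for evaluating per-symbol distortion one may work with the local sufficient statistic $(Y_i, S_i)$ instead of the full observation vector $(Y^n, S^n)$. Whereas in Lemma~\ref{Lemma1} the i.i.d.\ state and memoryless channel directly yield the reduction, here the encoder couples all symbols of $U^k$ into $X^n$, so one must verify that, under the i.i.d.\ source, i.i.d.\ state independent of the source, and memoryless-channel assumptions, the posterior $P_{U_i \mid Y^n, S^n}$ collapses to $P_{U_i \mid Y_i, S_i}$ at the relevant operating point. I would argue this along the lines of the derivation from \cite{ahmadipour2022information} invoked for Lemma~\ref{Lemma1}, exploiting the Markov chain $U^k \!\to\! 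X^n \!\to\! Y^n$ together with the independence of $S^n$ from $U^k$ to peel off the non-local coordinates inside the conditional expectation. Measurability of the argmin is immediate because it is taken over the finite alphabet $\hat{\mathcal{U}}$.
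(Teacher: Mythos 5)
The paper gives no proof of this lemma at all: it functions as a single-letter definition of the Bayes estimator $\hat{u}^*(s,y)$ under the law $P_U P_{X|U} P_S P_{YZ|SX}$, whose only role is to define the per-input communication distortion $d(x) := \mathbb{E}[d(U,\hat{u}^*(Y,S))\mid X=x]$. Your pointwise argument at a fixed pair $(y,s)$ — that minimizing the Bayes risk $\sum_u P_{U|YS}(u|y,s)\,d(u,\hat u)$ over $\hat u \in \hat{\mathcal{U}}$ yields $\hat{u}^*(s,y)$, with ties broken arbitrarily — is correct and is all the content this lemma actually needs.

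However, the block-level claim you build on top of it contains a genuine gap. You assert that the symbolwise decoder $g^*(y^n,s^n)=(\hat{u}^*(s_1,y_1),\ldots,\hat{u}^*(s_k,y_k))$ minimizes $\Delta_u^{(k)}$ over all decoders, which requires the collapse $P_{U_i\mid Y^n S^n}=P_{U_i\mid Y_i S_i}$; you acknowledge this is the delicate step, but the justification you sketch cannot work. The analogy with Lemma~\ref{Lemma1} breaks precisely here: on the sensing side the estimator observes $(X^n,Z^n)$, and since the state is i.i.d.\ and the channel is memoryless, given $X^n$ the pairs $(S_i,Z_i)$ are independent across time, so $(X_i,Z_i)$ is a sufficient statistic for $S_i$. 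On the communication side the encoder maps the \emph{entire} block $U^k$ (possibly with feedback) into $X^n$, so every output $Y_j$, $j\neq i$, carries information about $U_i$; the Markov chain $U^k \to X^n \to Y^n$ together with $S^n \perp U^k$ does not make the non-local coordinates irrelevant — it only says $Y^n$ depends on $U^k$ through $X^n$, which is exactly the coupling that prevents the reduction. Consequently a symbolwise decoder is in general strictly suboptimal (the paper's own achievability scheme uses a block joint-typicality decoder over $(S^n, X^n(\hat w), Y^n)$), and the optimality statement you set out to prove is false for nontrivial block codes. The fix is simply to scale back the claim to what the paper uses: $\hat{u}^*$ is the optimal estimator of a single source symbol $U$ from $(Y,S)$ in the single-letter auxiliary model, which your first two paragraphs already establish.
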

Based on the above, we define the estimation cost of the sensing-optimal estimator \( c(x) :=\mathbb{E}[d(S, \hat{s}^*(X, Z)) \mid X = x] \) and the communication-optimal estimator $d(x) := \mathbb{E}[d(U, \hat{u}^*(Y, S)) \mid X = x]$.
%\begin{subequations}\label{eq:estimation_costs}
   % \begin{align}
  %  c(x) &:= \mathbb{E}[d(S, \hat{s}^*(X, Z)) \mid X = x] \label{eq:sensing_cost}, \\
  %  d(x) &:= \mathbb{E}[d(U, \hat{u}^*(Y, S)) \mid X = x] \label{eq:communication_cost}.
   % \end{align}
%\end{subequations}

\section{Main Results}

To characterize useful properties of the capacity-distortion-cost function, we first define the following sets:
\begin{align*}
\mathcal{P}_B &= \Bigl\{ P_{X|U} \;\Big|\; \sum_{u \in \mathcal{U}} \sum_{x \in \mathcal{X}} P_U(u) P_{X|U}(x|u) \, b(x) \leq B \Bigr\},  \\
\mathcal{P}_{D_u} &= \Bigl\{ P_{X|U} \;\Big|\; \sum_{u \in \mathcal{U}} \sum_{x \in \mathcal{X}} P_U(u) P_{X|U}(x|u) \, d(x) \leq D_u \Bigr\},  \\
\mathcal{P}_{D_s} &= \Bigl\{ P_{X|U} \;\Big|\; \sum_{u \in \mathcal{U}} \sum_{x \in \mathcal{X}} P_U(u) P_{X|U}(x|u) \, c(x) \leq D_s \Bigr\}. \label{region}
\end{align*}

Then, the minimum distortion for a given cost \( B \) is
\begin{align}
D_{s_{\min}}(B) &:= \min_{P_{X|U} \in \mathcal{P}_B} \sum_{u \in \mathcal{U}} \sum_{x \in \mathcal{X}} P_U(u) P_{X|U}(x|u) \, c(x), \\
D_{u_{\min}}(B) &:= \min_{P_{X|U} \in \mathcal{P}_B} \sum_{u \in \mathcal{U}} \sum_{x \in \mathcal{X}} P_U(u) P_{X|U}(x|u) \, d(x).
\end{align}

Define the information tradeoff function \( C_{\inf} : [D_{s_{\min}}(B), \infty) \times [0, \infty) \to \mathbb{R}_0^+ \) as
\begin{equation}
    C_{\inf}(D_s, B) := \max_{P_{X|U} \in \mathcal{P}_{D_s} \cap \mathcal{P}_B} I(X; Y|S), 
\end{equation}
where \( (U, X, S, Y, Z) \sim P_U P_{X|U} P_S P_{YZ|SX} \), and the maximum is taken over all \( P_{X|U} \) satisfying the distortion and cost constraints in \eqref{eq:achievable_conditions}.

\begin{lemma}
    Given a JSCC framework of ISAC with source $U$ and SDMC \( P_{YZ|SX} \), the capacity-distortion-cost tradeoff function \( C_{\inf}(D_s, B) \) possesses the following properties:
    \begin{enumerate}
        \item \( C_{\inf}(D_s, B) \) is non-decreasing and concave in \( D_s \geq D_{s_{\min}}(B) \) and \( B \geq 0 \).
        \item \( C_{\inf}(D_s, B) \) attains the channel capacity when \( D_s \geq D_{s_{\max}}(B) \):
     %  \begin{IEEEeqnarray}{rCl}
      $ C_{\inf}(D_s, B) = C_{\text{NoEst}}(B), ~ \forall D_s \geq D_{s_{\max}}(B),$ 
%\end{IEEEeqnarray}sensing
       where \( C_{\text{NoEst}}(B) := \max_{P_{X|U} \in \mathcal{P}_B} I(X; Y|S) \) denotes the classical channel capacity of the SDMC given cost \( B \), and \( D_{s_{\max}}(B) \) is the corresponding sensing distortion: $
       D_{s_{\max}}(B) := \sum_{u \in \mathcal{U}} P_U(u) P_{X|U_{\max}}(x|u) \, c(x),  $
       with \( P_{X|U_{\max}} := \arg \max_{P_{X|U} \in \mathcal{P}_B} I(X; Y|S) \).
    \end{enumerate}
\end{lemma}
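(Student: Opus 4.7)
The plan is to prove the two properties separately, starting with monotonicity, then concavity, and finally the saturation identity, exploiting the fact that the feasible set is parametrized linearly by $P_{X|U}$ and the objective $I(X;Y|S)$ is concave in $P_X$.

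\textbf{Monotonicity.} First I would observe that the constraint sets $\mathcal{P}_{D_s}$ and $\mathcal{P}_B$ are nested in $D_s$ and $B$: if $D_s' \geq D_s$ then $\mathcal{P}_{D_s} \subseteq \mathcal{P}_{D_s'}$, and similarly for $B$. Hence the feasible region $\mathcal{P}_{D_s} \cap \mathcal{P}_B$ only grows as $D_s$ or $B$ increases, so the maximum of $I(X;Y|S)$ over it is non-decreasing in each argument.

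\textbf{Concavity.} Fix $(D_s^{(1)}, B^{(1)})$ and $(D_s^{(2)}, B^{(2)})$ with respective maximizers $P_{X|U}^{(1)}, P_{X|U}^{(2)}$ and $\lambda \in [0,1]$. Let $P_{X|U}^{(\lambda)} := \lambda P_{X|U}^{(1)} + (1-\lambda) P_{X|U}^{(2)}$. Because the cost $c(x)$ and distortion-surrogate $d(x)$ constraints are linear functionals of $P_{X|U}$ (for fixed $P_U$), $P_{X|U}^{(\lambda)}$ lies in $\mathcal{P}_{\lambda D_s^{(1)}+(1-\lambda)D_s^{(2)}}\cap \mathcal{P}_{\lambda B^{(1)}+(1-\lambda)B^{(2)}}$. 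Next I would note that the induced input distribution $P_X(x) = \sum_u P_U(u)P_{X|U}(x|u)$ is a linear function of $P_{X|U}$, that $X$ is independent of $S$, and that $I(X;Y|S) = \sum_s P_S(s)\bigl[H(Y|S=s)-H(Y|X,S=s)\bigr]$ is concave in $P_X$ (since for each $s$, the conditional entropy $H(Y|S=s)$ is concave in $P_X$ while $H(Y|X,S=s)$ is linear in $P_X$). Composing the two, $I(X;Y|S)$ is concave in $P_{X|U}$, so
\begin{equation*}
C_{\inf}\bigl(\lambda D_s^{(1)}+(1-\lambda)D_s^{(2)},\,\lambda B^{(1)}+(1-\lambda)B^{(2)}\bigr) \geq \lambda C_{\inf}(D_s^{(1)},B^{(1)}) + (1-\lambda)C_{\inf}(D_s^{(2)},B^{(2)}),
\end{equation*}
which is the claimed joint concavity.

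\textbf{Saturation at large $D_s$.} For the second property, I would start from the definition $D_{s_{\max}}(B) = \sum_{u,x} P_U(u)P_{X|U_{\max}}(x|u)\,c(x)$, so by construction $P_{X|U_{\max}} \in \mathcal{P}_{D_{s_{\max}}(B)} \cap \mathcal{P}_B$. For any $D_s \geq D_{s_{\max}}(B)$, monotonicity of the constraint sets gives $P_{X|U_{\max}} \in \mathcal{P}_{D_s} \cap \mathcal{P}_B$. Hence the sensing constraint is inactive, and $C_{\inf}(D_s,B) \geq I(X;Y|S)\big|_{P_{X|U_{\max}}} = C_{\text{NoEst}}(B)$. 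The reverse inequality $C_{\inf}(D_s,B)\leq C_{\text{NoEst}}(B)$ is immediate because $\mathcal{P}_{D_s}\cap \mathcal{P}_B \subseteq \mathcal{P}_B$.

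The only mildly subtle step is the concavity of $I(X;Y|S)$ in $P_{X|U}$, which I expect to be the main obstacle to state cleanly. I would prefer the direct composition argument above over a time-sharing reduction, since $U$ is a genuine source (not an auxiliary variable) and $X$ is independent of $S$, making the reduction to concavity of mutual information in $P_X$ transparent.
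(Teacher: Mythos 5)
Your proof is correct, and for property 1 it follows essentially the same route as the paper: monotonicity from the nestedness of $\mathcal{P}_{D_s}$ and $\mathcal{P}_B$, and concavity by mixing the two optimizing distributions and invoking concavity of $I(X;Y|S)$ in the input distribution together with linearity of the cost and sensing constraints. The only cosmetic difference is that you take the convex combination at the level of $P_{X|U}$ and then pass to $P_X$ by linearity of marginalization (with $P_U$ fixed), whereas the paper mixes the induced marginals $P_X^{(k)} = P_U\cdot P_{X|U}^{(k)}$ directly; these are equivalent. Two things your write-up adds: (i) you spell out why $I(X;Y|S)$ is concave in $P_X$ via the decomposition $\sum_s P_S(s)\bigl[H(Y|S=s)-H(Y|X,S=s)\bigr]$ with the first term concave and the second linear in $P_X$, a fact the paper simply cites; and (ii) you actually prove property 2, which the paper states in the lemma but never argues — your observation that $P_{X|U_{\max}}$ is feasible for every $D_s \geq D_{s_{\max}}(B)$, combined with $\mathcal{P}_{D_s}\cap\mathcal{P}_B \subseteq \mathcal{P}_B$, is exactly the right (and complete) argument. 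You also state the final concavity step as an inequality $C_{\inf}(\lambda D_s^{(1)}+(1-\lambda)D_s^{(2)},\cdot) \geq \lambda C_{\inf}+(1-\lambda)C_{\inf}$, which is the correct direction; the paper writes this step as an equality, which is a minor slip on its part since the mixture need not be optimal at the averaged constraint point.
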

\begin{proof}
    The non-decreasing property follows directly from the definition, since for any \( D_{s_1} \leq D_{s_2} \) and \( B_1 \leq B_2 \), we have \( \mathcal{P}_{D_{s_1}} \subseteq \mathcal{P}_{D_{s_2}} \) and \( \mathcal{P}_{B_1} \subseteq \mathcal{P}_{B_2} \).
    To verify the concavity of \( C_{\inf}(D_s, B) \) with respect to \( (D_s, B) \), consider two input distributions achieving \( C_{\inf}(D_{s_1}, B_1) \) and \( C_{\inf}(D_{s_2}, B_2) \), denoted by \( P_X^{(1)} \) and \( P_X^{(2)} \), respectively, where \( P_X^{(1)} = P_U \cdot P_{X|U}^{(1)} \) and \( P_X^{(2)} = P_U \cdot P_{X|U}^{(2)} \).
    To make the dependence of mutual information on the input distribution explicit, we introduce the notation: for any probability mass function \( P_X \) on the input alphabet \( \mathcal{X} \), let \( \mathcal{L}(P_X, P_{Y|XS}|P_S) := I(X; Y|S) \), where \( (S, X, Y) \sim P_S P_X P_{Y|XS} \). For any \( \theta \in (0,1) \), we have:
    \begin{align}
    \MoveEqLeft[1] \theta C_{\inf}(D_{s_1}, B_1) + (1 - \theta)C_{\inf}(D_{s_2}, B_2) \nonumber \\
    &\overset{(a)}{=} \theta \mathcal{L}\bigl( P_X^{(1)}, P_{Y|XS}|P_S \bigr)  + (1 - \theta) \mathcal{L}\bigl( P_X^{(2)}, P_{Y|XS}|P_S \bigr) \nonumber \\
    &\overset{(b)}{\leq} \mathcal{L}\bigl( \theta P_X^{(1)} + (1 - \theta)P_X^{(2)}, P_{Y|XS}|P_S \bigr) \nonumber \\
    &\overset{(c)}{=} C_{\inf}\bigl( \theta D_{s_1} + (1 - \theta)D_{s_2}, \; \theta B_1 + (1 - \theta)B_2 \bigr). 
    \end{align}
    Here, (a) follows from the definition, (b) from the concavity of mutual information in the input distribution, and (c) from the linearity of the constraints and the fact that for each \( k = 1, 2 \), the probability mass function \( P_X^{(k)} \) yields expected sensing distortion at most \( D_{{s_k}} \) and expected cost at most \( B_k \).
    This establishes the concavity of \( C_{\inf}(D_s, B) \).
\end{proof}

\begin{Definition}
    For a source \( U \) with distortion measure \( d(u, \hat{u}) \) and given estimates of channel states up to the current time, the rate-distortion function is defined as:
    \begin{equation}
        R(D_u, D_s) = \min_{p(\hat{u}|u),\, p(\hat{s}|s)} I(U; \hat{U}|S),
    \end{equation}
    where the minimization is over all conditional distributions \( p(\hat{u}|u) \) and \( p(\hat{s}|s) \) such that the joint distributions satisfy the expected distortion constraints: 
      $  \sum_{u, \hat{u}} P(u) p(\hat{u}|u) d(u, \hat{u})  \leq D_u, $ and 
     $   \sum_{s, \hat{s}} P(s) p(\hat{s}|s) d(s, \hat{s}) \leq D_s.$
    
\end{Definition}

\begin{lemma}
(Convexity of $R(D_u, D_s)$) The two‑distortion rate‑distortion function $R(D_u, D_s)$ is a non‑increasing convex function of $(D_u, D_s)$.
\end{lemma}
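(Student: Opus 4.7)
The plan is to establish the two properties separately, both by elementary arguments analogous to those used for the classical single-distortion rate-distortion function.

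For the non-increasing property, the observation is that the minimization is performed over a feasible set defined by the two linear inequality constraints
$\sum_{u,\hat{u}} P(u)p(\hat{u}|u)d(u,\hat{u})\leq D_u$ and
$\sum_{s,\hat{s}} P(s)p(\hat{s}|s)d(s,\hat{s})\leq D_s$.
If $D_u^{(1)}\leq D_u^{(2)}$ and $D_s^{(1)}\leq D_s^{(2)}$, every pair of conditionals feasible for the first pair of constraints is also feasible for the second. Hence the minimum taken over the larger set cannot exceed the minimum over the smaller set, which immediately gives $R(D_u^{(2)},D_s^{(2)})\leq R(D_u^{(1)},D_s^{(1)})$.

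For convexity, I would carry out the standard time-sharing / mixture argument. Let $(p^{(1)}(\hat{u}|u),p^{(1)}(\hat{s}|s))$ and $(p^{(2)}(\hat{u}|u),p^{(2)}(\hat{s}|s))$ attain $R(D_u^{(1)},D_s^{(1)})$ and $R(D_u^{(2)},D_s^{(2)})$, respectively. For any $\theta\in[0,1]$, define the mixtures $p^{(\theta)}(\hat{u}|u):=\theta p^{(1)}(\hat{u}|u)+(1-\theta)p^{(2)}(\hat{u}|u)$ and $p^{(\theta)}(\hat{s}|s):=\theta p^{(1)}(\hat{s}|s)+(1-\theta)p^{(2)}(\hat{s}|s)$. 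Since the distortion constraints are linear in the conditionals, the mixture satisfies
$\sum_{u,\hat{u}}P(u)p^{(\theta)}(\hat{u}|u)d(u,\hat{u})\leq \theta D_u^{(1)}+(1-\theta)D_u^{(2)}$ and analogously for $D_s$. Therefore $(p^{(\theta)}(\hat{u}|u),p^{(\theta)}(\hat{s}|s))$ is feasible for the distortion pair $(\theta D_u^{(1)}+(1-\theta)D_u^{(2)},\theta D_s^{(1)}+(1-\theta)D_s^{(2)})$.

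It then remains to bound the objective at this mixture. Because $P_U$ and $P_S$ are fixed, the conditional mutual information $I(U;\hat{U}|S)$ can be viewed as a function of $p(\hat{u}|u)$ alone (with $\hat{U}\to U\to S$ or $U\perp S$ implicit in the setup, so the conditioning on $S$ merely averages $I(U;\hat{U}|S=s)$ over $s$). For each fixed $s$, $I(U;\hat{U}|S=s)$ is convex in the channel $p(\hat{u}|u)$ by the classical convexity of mutual information in the channel (for fixed input distribution), and convex combinations preserve convexity under the $s$-average. Hence
$\mathcal{L}(p^{(\theta)})\leq \theta\mathcal{L}(p^{(1)})+(1-\theta)\mathcal{L}(p^{(2)})=\theta R(D_u^{(1)},D_s^{(1)})+(1-\theta)R(D_u^{(2)},D_s^{(2)})$,
where $\mathcal{L}(p)$ denotes $I(U;\hat{U}|S)$ evaluated at conditional $p$. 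Combined with the feasibility observation, this yields the desired convexity inequality.

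The only delicate step is the convexity of $I(U;\hat{U}|S)$ in $p(\hat{u}|u)$, which I would justify by appealing to the log-sum inequality (or equivalently, to the fact that mutual information is convex in the transition kernel for a fixed input marginal), applied slicewise in $s$ and then averaged over $P_S$. Once this is in place, the non-increasing and convexity statements follow without further difficulty.
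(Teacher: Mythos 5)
Your proposal is correct and follows essentially the same route as the paper's proof: monotonicity via nesting of the feasible sets, and convexity via mixing the two achieving conditionals, using linearity of the distortion constraints and convexity of the (conditional) mutual information in the transition kernel for fixed source/state marginals. The only difference is that you spell out the slicewise-in-$s$ justification for the convexity of $I(U;\hat{U}|S)$ in $p(\hat{u}|u)$, which the paper simply asserts.
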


\begin{proof}
See the proof in Appendix \ref{AppProofLemma1}.
\end{proof}

We now state the main result of this section.
\begin{theorem}
    Given a JSCC framework of ISAC with source $U$ and SDMC \( P_{YZ|SX} \), the capacity-distortion-cost tradeoff function  is:
\begin{IEEEeqnarray}{rCl}
C(D_s, B) = C_{\inf}(D_s, B), ~ D_s \geq D_{s_{\min}}(B), \; B \geq 0. \quad 
\end{IEEEeqnarray}
\end{theorem}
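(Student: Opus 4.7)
The plan is to prove $C(D_s,B)=C_{\inf}(D_s,B)$ by matching a single-letter converse with a separate source-channel coding (SSCC) achievability argument. The building blocks are already in place: Lemma 1 reduces the sensing distortion for \emph{any} code to $\frac{1}{n}\sum_i \mathbb{E}[c(X_i)]$, so that the sensing and cost constraints both become expectations of per-letter functionals of $X$; the just-proved concavity of $C_{\inf}$ in $(D_s,B)$ is the lever that turns per-letter averages into a single-letter characterization; and the convexity of $R(D_u,D_s)$ from Lemma 3 will be used on the source side.

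For the converse I would start from an arbitrary sequence of $(|\mathcal{U}|^k,n)$ codes whose parameters converge to an achievable tuple $(\gamma,D_u,D_s,B)$. Using the rate-distortion bound on $U^k$ (with side information $S^n$ at the decoder) together with data-processing and the memoryless property of $P_{YZ|SX}$, I would chain
\[
k\,R(D_u,D_s)\;\leq\;I(U^k;\hat U^k\mid S^n)+k\epsilon_n\;\leq\;I(X^n;Y^n\mid S^n)+k\epsilon_n\;\leq\;\sum_{i=1}^n I(X_i;Y_i\mid S_i)+k\epsilon_n.
\]
Introducing a uniform time-sharing index $Q\in[1:n]$ and setting $U=U_Q$, $X=X_Q$, $S=S_Q$, $Y=Y_Q$, the induced $P_{X|U}$ satisfies $\mathbb{E}[c(X)]\leq D_s$ and $\mathbb{E}[b(X)]\leq B$ by Lemma 1 and the cost constraint, hence lies in $\mathcal{P}_{D_s}\cap\mathcal{P}_B$. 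Averaging and invoking the concavity of $C_{\inf}$ then yields $\gamma R(D_u,D_s)\leq C_{\inf}(D_s,B)$, i.e., $C(D_s,B)\leq C_{\inf}(D_s,B)$.

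For achievability I would construct an SSCC scheme. On the source side, compress $U^k$ into an index at rate arbitrarily close to $R(D_u,D_s)$; since $S^n$ is perfectly known at the decoder, a Wyner--Ziv-type code suffices and the constraints of Definition 3 are met. On the channel side, pick $P_{X|U}^{\star}$ attaining the maximum in $C_{\inf}(D_s,B)$ and draw an i.i.d.\ codebook from the induced marginal $P_X(x)=\sum_u P_U(u)P_{X|U}^{\star}(x|u)$. A standard joint-typicality decoder, augmented with typicality with respect to $b(\cdot)$ and $c(\cdot)$, achieves any rate strictly below $I(X;Y|S)=C_{\inf}(D_s,B)$ while ensuring that the transmitted $X^n$ is typical for $P_X$; the symbolwise estimator of Lemma 1 then delivers expected sensing distortion $\mathbb{E}[c(X)]\leq D_s$, and the cost constraint is met by construction. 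Matching with the source-code rate via $\gamma R(D_u,D_s)<C_{\inf}(D_s,B)$ closes the gap.

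The step I expect to be the main obstacle is correctly handling the generalized feedback $Z^{i-1}$ at the encoder. In the converse, $X_i$ may depend on $Z^{i-1}$, so the inequality $I(X^n;Y^n\mid S^n)\leq\sum_i I(X_i;Y_i\mid S_i)$ must be justified from the memorylessness of $P_{YZ|SX}$ and the independence of $S_i$ from the past, rather than from a naive Markov argument; this parallels the Massey-type directed-information observation that feedback does not enlarge the capacity of a memoryless channel under per-letter cost, but it should be written out carefully here because the sensing constraint is itself a per-letter cost on $X_i$ induced by Lemma 1. On the achievability side, I must verify that the feedback-\emph{free} SSCC construction already saturates $C_{\inf}$, which is where the choice to define $C_{\inf}$ over $P_{X|U}$ rather than more general feedback policies pays off. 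Once these points are nailed down, the remaining ingredients (Fano, typical-set concentration for $b$ and $c$, and rate balancing between source and channel codes) are routine.
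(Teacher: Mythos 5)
Your proposal follows essentially the same route as the paper: a converse that bounds $kR(D_u,D_s)$ by $\sum_{i} I(X_i;Y_i\mid S_i)$ and then single-letterizes via the concavity of $C_{\inf}$ (your time-sharing variable $Q$ is the paper's Jensen step in different clothing, and Lemma~1 plays the same role in converting the sensing constraint into a per-letter cost on $X$), and an achievability based on separate lossy source coding plus an i.i.d.\ channel codebook with joint-typicality decoding and the symbolwise estimator of Lemma~1. Your Wyner--Ziv remark is superfluous since $U$ is independent of $S$, so plain lossy source coding already gives the paper's scheme; these are cosmetic differences.

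One step in your displayed converse chain is, however, wrong as written rather than merely in need of ``careful justification'': routing through $I(X^n;Y^n\mid S^n)$. With an encoder that uses the generalized feedback ($X_i$ a function of $(U^k,Z^{i-1})$), the chain $U^k-(X^n,S^n)-Y^n$ need not be Markov, and, more importantly, the inequality $I(X^n;Y^n\mid S^n)\leq\sum_i I(X_i;Y_i\mid S_i)$ can simply fail: take a channel whose output is pure noise, perfect feedback $Z=Y$, and the encoding $X_i=Z_{i-1}$; then every per-letter term is zero while $I(X^n;Y^n)\geq H(Y_1)>0$. This is exactly Massey's point that ordinary mutual information over-counts under feedback, so the step cannot be ``justified''---it must be bypassed. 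The correct route, which is what the paper does in \eqref{eq:step8}--\eqref{eq:step10}, is to bound $I(U^k;\hat U^k\mid S^n)\leq I(U^k;Y^n\mid S^n)=\sum_i I(U^k;Y_i\mid S^n,Y^{i-1})\leq\sum_i I(X_i;Y_i\mid S_i)$ directly, using only that, given $(X_i,S_i)$, the output $Y_i$ is independent of the past (message, past inputs, outputs, states) and of the i.i.d.\ future states, and never conditioning on future inputs. With that substitution, your converse and the rest of the proposal match the paper's proof.
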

\begin{proof}
    1) \textit{Converse part:} Fix an $(|\mathcal{U}|^k, n)$ code satisfying the constraints in \eqref{eq:achievable_conditions}. For simplicity, we assume the rate $\gamma = 1$ symbol/transmission. Thus, we have 
    % \begin{subequations}
    %     \begin{align}
    %         k R(D_u, D_s) 
    %         &\leq k R\Bigl( \mathbb{E}[d(U^k, \hat{U}^k)] , \mathbb{E}[d(S^n, \hat{S}^n)] \Bigr)  \\
    %         &\leq k \cdot \frac{1}{k} \sum_{i=1}^{k} R\Bigl( \mathbb{E}[d(U, \hat{U})] , \mathbb{E}[d(S, \hat{S})] \Bigr)  \\
    %         &= \sum_{i=1}^{k} R\Bigl( \mathbb{E}[d(U_i, \hat{U}_i)] , \mathbb{E}[d(S_i, \hat{S}_i)] \Bigr)  \\
    %         &\leq \sum_{i=1}^{k} I(U_i; \hat{U}_i|S_i)  \\
    %         &\leq \sum_{i=1}^{k} I(U_i; \hat{U}_i)  \\
    %         &= \sum_{i=1}^{k} \bigl( H(U_i) - H(U_i | \hat{U}_i) \bigr)  \\
    %         &\leq \sum_{i=1}^{k} \bigl( H(U_i) - H(U_i | U_1 \cdots U_{i-1}, \hat{U}^k) \bigr)  \\
    %         &= \sum_{i=1}^{k} H(U_i) - H(U^k | \hat{U}^k)  \\
    %         &= H(U^k) - H(U^k | \hat{U}^k)  \\
    %         &= I(U^k; \hat{U}^k)  \\
    %         &\leq I(U^k; \hat{U}^k, S^k)  \\
    %         &\leq I(U^k; Y^k, S^k)  \\
    %         &= I(U^k; S^k) + I(U^k; Y^k | S^k)  \\
    %         &= I(U^k; Y^k | S^k)  \\
    %         &= \sum_{i=1}^{k} \bigl( H(Y_i | Y^{i-1}, S^k) - H(Y_i | U^k, Y^{i-1}, S^k) \bigr)  \\
    %         &\leq \sum_{i=1}^{k} \bigl( H(Y_i | S_i) - H(Y_i | X_i, U^k, Y^{i-1}, S^k) \bigr)  \\
    %         &= \sum_{i=1}^{k} \bigl( H(Y_i | S_i) - H(Y_i | X_i, S_i) \bigr)  \\
    %         &= \sum_{i=1}^{k} I(X_i; Y_i | S_i). 
    %     \end{align}
    % \end{subequations}
    \begin{subequations}\label{eq:main_derivation}
    \begin{align}
        k R(D_u, D_s) 
        &\leq k R\Bigl( \mathbb{E}[d(U^k, \hat{U}^k)] , \mathbb{E}[d(S^n, \hat{S}^n)] \Bigr) \label{eq:step1} \\
        &\leq \sum_{i=1}^{k} R\Bigl( \mathbb{E}[d(U_i, \hat{U}_i)] , \mathbb{E}[d(S_i, \hat{S}_i)] \Bigr) \label{eq:step2} \\
        &\leq \sum_{i=1}^{k} I(U_i; \hat{U}_i | S_i) \label{eq:step3} \\
        &\leq \sum_{i=1}^{k} I(U_i; \hat{U}_i) \label{eq:step4} \\
        &= \sum_{i=1}^{k} \bigl[ H(U_i) - H(U_i | \hat{U}_i) \bigr] \label{eq:step5} \\
        &\leq H(U^k) - H(U^k | \hat{U}^k) \label{eq:step6} \\
        &= I(U^k; \hat{U}^k) \label{eq:step7} \\
        &\leq I(U^k; Y^k, S^k) \label{eq:step8} \\
        &= I(U^k; Y^k | S^k) \label{eq:step9} \\
        &\leq \sum_{i=1}^{k} I(X_i; Y_i | S_i). \label{eq:step10}
    \end{align}
    \end{subequations}
    
    Here, \eqref{eq:step1} follows from the definition of communication distortion and the monotonicity of the rate-distortion function; \eqref{eq:step2} from the convexity of the rate-distortion function and Jensen's inequality; \eqref{eq:step3} from the definition of the rate-distortion function; \eqref{eq:step5} from the relation between mutual information and entropy; \eqref{eq:step6} from the chain rule of entropy and the independence of source sequences; \eqref{eq:step7} from the relation between mutual information and entropy; \eqref{eq:step8} from the chain rule of mutual information and non-negativity; \eqref{eq:step9} from the data processing inequality for the Markov chain \( (U, Y, \hat{U}) \) and the independence of \( U \) and \( S \); \eqref{eq:step10} from the chain rule and the fact that the output of the state-dependent memoryless channel depends only on the current input and channel state.
    
    Thus, we obtain $
    R(D_u, D_s) \leq \frac{1}{k} \sum_{i=1}^{k} I(X_i; Y_i|S_i).$ 
    Further,
    \begin{align}
        \frac{1}{k} &\sum_{i=1}^{k} I(X_i; Y_i|S_i) \nonumber \\
        &\leq \frac{1}{k} \sum_{i=1}^{k} C_{\inf}\Bigl( \sum_{x \in \mathcal{X}} P_X(x)c(x), \sum_{x \in \mathcal{X}} P_X(x)b(x) \Bigr) \nonumber \\
        &\leq C_{\inf}\Bigl( \frac{1}{k} \sum_{i=1}^{k} \sum_{x \in \mathcal{X}} P_X(x)c(x), \frac{1}{k} \sum_{i=1}^{k} \sum_{x \in \mathcal{X}} P_X(x)b(x) \Bigr) \nonumber \\
        &\leq C_{\inf}(D_s, B). 
    \end{align}
    These inequalities follow from the definition of the rate-distortion-cost function, the concavity of the capacity function, and Jensen's inequality. Then  we have:
    \begin{equation}
         R(D_u, D_s) \leq C_{\inf}(D_s, B). 
    \end{equation}
    
    2) \textit{Achievability part:} We use separate lossy source coding and channel coding. Fix \( P_{X|U}(\cdot|\cdot) \) and the function \( \hat{S}^*(x, z) \) such that the channel capacity attains \( C(D_s/(1+\epsilon), B) \), where \( D_s \) is the desired sensing distortion, \( B \) is the target cost, and \( \epsilon > 0 \) is a small constant. Define the joint probability mass function \( P_{SXY} := P_S P_X P_{Y|SX} \). Randomly and independently generate \( 2^{nR} \) sequences \( \{x^n(w)\}_{w=1}^{2^{nR}} \) according to \( P_X \). This defines the codebook \( \mathcal{C} = \{x^n(w)\}_{w=1}^{2^{nR}} \), which is perfectly known to both encoder and decoder.
    
    a) \textit{Source encoding:} Assume a source that produces a sequence    $U_1, U_2, \dots, U_k \quad \text{i.i.d.}$ according to $p(x), \; x \in \mathcal{X}.$
   
    The encoder describes the source sequence \( u^k \) via an index \( W = f(U^k, S^n) \in \{1, 2, \dots, 2^{nR}\} \). For any \( \epsilon > 0 \), there exists a lossy source coding scheme with compression rate \( R(D_u/(1+\epsilon), D_s/(1+\epsilon)) + \delta(\epsilon) \) such that the expected source compression distortion is at most \( D_u \) and the expected sensing distortion is at most \( D_s \), i.e.,
    \begin{align}
    \mathbb{E}[d(U, \hat{U})] &= \sum_{u, \hat{u}} P(u) P(\hat{u}|u) d(u, \hat{u}) \leq D_u/(1+\epsilon),  \\
    \mathbb{E}[d(S, \hat{S})] &= \sum_{s, \hat{s}} P(s) P(\hat{s}|s) d(s, \hat{s}) \leq D_s/(1+\epsilon).    
    \end{align}
    Equivalently,
    \begin{align}
    \mathbb{E}[d(U, \hat{U})]  < D_u, ~
    \mathbb{E}[d(S, \hat{S})]  < D_s. 
    \end{align}
    
    b) \textit{Channel coding:} To send message \( w \in \mathcal{W} \), the encoder transmits \( x^n(w) \). If \( R(D_u/(1+\epsilon), D_s/(1+\epsilon)) + \delta(\epsilon) \leq C_{\inf}(D_s, B) - \delta'(\epsilon) \), then the source sequence can be transmitted reliably over the channel. The source decoder finds the reconstruction sequence corresponding to the received index. If the channel decoder makes an error, the distortion is bounded above by \( d_{\max} \). As \( n \to \infty \), the error probability tends to zero, so the overall expected communication distortion is at most \( D_u \). Upon observing the output \( Y^n = y^n \) and state sequence \( S^n = s^n \), the decoder looks for an index \( \hat{w} \) such that
    \begin{equation}
        (s^n, x^n(\hat{w}), y^n) \in T^{(n)}_\epsilon (P_{SXY}),
    \end{equation}
    where \( T^{(n)}_\epsilon \) denotes the set of jointly typical sequences. If exactly one such index exists, it declares \( \hat{W} = \hat{w} \); otherwise, it declares an error.
    
    c) \textit{Source decoding:} If a valid \( \hat{W} \) is obtained that satisfies joint asymptotic equipartition, the decoding function yields \( \hat{u}^k(\hat{w}) \in \hat{\mathcal{U}}^k \).
    
    d) \textit{Estimation:} Assuming the transmitted input sequence is \( X^n = x^n \) and the feedback signal observed is \( Z^n = z^n \), the encoder computes the state sequence as:
    \begin{equation}
        \hat{S}^n = \bigl( \hat{s}^*(x_1, z_1), \hat{s}^*(x_2, z_2), \ldots, \hat{s}^*(x_n, z_n) \bigr).
    \end{equation}
    
    e) \textit{Analysis:} First, we analyze the average error probability and distortion under random codebook construction. Without loss of generality, we assume \( W = 1 \). Note that the decoder errs, i.e., declares \( \hat{W} \neq 1 \), if and only if one or both of the following events occur:
    \begin{align*}
    \mathcal{E}_1 &= \bigl\{ (S^n, X^n(1), Y^n) \notin T^{(n)}_\epsilon (P_{SXY}) \bigr\},  \\
    \mathcal{E}_2 &= \bigl\{ (S^n, X^n(w'), Y^n) \in T^{(n)}_\epsilon (P_{SXY}) \text{ for some } w' \neq 1 \bigr\}. 
    \end{align*}
    Hence, by the union bound:
    \begin{equation}
        P^{(n)}_e = \Pr(W \neq \hat{W}) = \Pr(\mathcal{E}_1 \cup \mathcal{E}_2) \leq \Pr(\mathcal{E}_1) + \Pr(\mathcal{E}_2).
    \end{equation}
    The first term tends to zero as \( n \to \infty \) by the Weak Law of Large Numbers (WLLN). If \( R < I(X; Y|S) \), the second term also tends to zero as \( n \to \infty \) due to the independence of the codebook and the Packing Lemma \cite{el2011network}. Therefore, as long as \( R < I(X; Y|S) \), \( P^{(n)}_e \to 0 \) as \( n \to \infty \).
    
    The expected sensing distortion under random coding, state, and channel noise can be bounded as:
    \begin{align*}
    \Delta^{(n)} &= \frac{1}{n} \sum_{i=1}^{n} \mathbb{E}[d(S_i, \hat{S}_i)] \nonumber \\
    &= \frac{1}{n} \sum_{i=1}^{n} \mathbb{E}[d(S_i, \hat{S}_i) | \hat{W} \neq 1] \Pr(\hat{W} \neq 1) \nonumber \\
    &\quad + \frac{1}{n} \sum_{i=1}^{n} \mathbb{E}[d(S_i, \hat{S}_i) | \hat{W} = 1] \Pr(\hat{W} = 1) \nonumber \\
    &\leq D_{s_{\max}} P_e + \frac{1}{n} \sum_{i=1}^{n} \mathbb{E}[d(S_i, \hat{S}_i) | \hat{W} = 1] \cdot (1 - P_e), 
    \end{align*}
    where \( D_{s_{\max}} = \max_{(s, \hat{s}) \in \mathcal{S} \times \mathcal{S}} d(s, \hat{s}) \). In the event of correct decoding, i.e., \( \hat{W} = 1 \), we have
    \begin{equation}
        (S^n, X^n(1), Y^n) \in T^{(n)}_\epsilon (P_S P_X P_{Y|SX}),
    \end{equation}
    and since \( \hat{S}_i = \hat{s}^*(X_i, Z_i) \), we also have
    \begin{equation}
        (S^n, X^n(1), \hat{S}^n) \in T^{(n)}_\epsilon (P_{SX\hat{S}}),
    \end{equation}
    where \( P_{SX\hat{S}} \) denotes the joint marginal probability mass function derived from \( P_{SXZ\hat{S}}(s, x, z, \hat{s}) := P_S(s) P_X(x) P_{Z|SX}(z|s, x) \mathbf{1}\{\hat{s} = \hat{s}^*(x, z)\} \). Then,
    \begin{equation}
        \lim_{n \to \infty} \frac{1}{n} \sum_{i=1}^{n} \mathbb{E}[d(S_i, \hat{S}_i) | \hat{W} = 1] \leq (1+\epsilon) \mathbb{E}[d(S, \hat{S})],
    \end{equation}
    where \( (S, \hat{S}) \) follows the marginal distribution from \( P_{SXZ\hat{S}} \) defined above.
    
    Assuming \( R < I(X; Y|S) \), so that \( P^{(n)}_e \to 0 \) as \( n \to \infty \), the above derivation yields:
    \begin{equation}
        \lim_{n \to \infty} \Delta^{(n)} = (1+\epsilon) \mathbb{E}[d(S, \hat{S})].
    \end{equation}
    Finally, as \( \epsilon \to 0 \), we conclude that as long as the following conditions hold:
    \begin{subequations}
        \begin{align}
            R &< I(X; Y|S),  \\
            \mathbb{E}[d(S, \hat{S})] &< D_s,  \\
            \mathbb{E}[d(U, \hat{U})] &< D_u.
        \end{align}
    \end{subequations}
    The error probability and distortion constraints in \eqref{eq:achievable_conditions} are satisfied on average over random codebook construction, random state, and channel noise. The cost constraint is satisfied by construction. Through standard arguments, it can be shown that there exists at least one deterministic sequence of codebooks \( \{\mathcal{C}_n\} \) such that the constraints in \eqref{eq:achievable_conditions} hold.
\end{proof}

\section{Capacity-Distortion-Cost Tradeoff For the Binary Case}
%We consider a binary channel with Bernoulli distribution, precisely characterizing the capacity-distortion-cost tradeoff, which allows verification of the theorem proposed.

Consider a binary-input binary-output channel with input $X \in \{0,1\}$ and output $Y = SX$, where the channel state $S$ is a binary random variable with $S \sim \text{Bernoulli}(q)$. The transmitter has perfect output feedback $Z = Y$, and wishes to estimate $S$ with the Hamming distortion measure $d(s,\hat{s}) = s \oplus \hat{s}$ is considered. For convenience, we define $H_b(p) := -p \log_2 (p) - (1-p)\log_2 (1-p)$.

For this binary channel, let $P_Y(0) = p$, thus $P_Y(1) = 1-p$. For simplify, we denote $P_{X|U}(0|0)=a, \quad P_{X|U}(0|1)=b, \quad P_{X|U}(1|0)=c$, and  $P_{X|U}(1|1)=d,$ where $a,b,c,d \in [0,1]$, $a + c = 1$ and $b + d = 1$. Let $P_X(0) := \alpha$ and $P_X(1) := \beta,$ where $\alpha, \beta \in [0,1]$ and $\alpha + \beta = 1$.

\begin{Corollary}
    The capacity-distortion tradeoff and rate-distortion tradeoff for the binary channel with multiple Bernoulli states are given by:
    \begin{equation} \label{hh}
        \begin{cases}
            R(D_u, D_s) = q \cdot H_b\left(\frac{D_s}{q}\right) - p \cdot q \cdot H_b\left(\frac{D_s - D_u + p}{2p \cdot q}\right) \\
            \qquad\qquad\qquad - (1-p) \cdot q \cdot H_b\left(\frac{D_u + D_s - p}{2(1-p) \cdot q}\right) \\
            C(D_s) = q \cdot H_b\left(\frac{D_s}{q}\right),
        \end{cases}
    \end{equation}
    where $p$ and $q$ are constants with $p \in [0,\frac{1}{2}]$ and $q \in [0,\frac{1}{2}]$.
\end{Corollary}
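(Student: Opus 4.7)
This channel is very special: since $Z = Y = SX$, the feedback reveals $S$ perfectly whenever $X=1$ and carries no information about $S$ whenever $X=0$. I would first invoke Lemma~\ref{Lemma1} to obtain the symbolwise estimator. A direct computation of $P_{S \mid X, Z}$ from \eqref{eq:prob} gives $\hat s^*(1,z)=z$ with zero error, while on $X=0$ the MAP rule (using $q \le 1/2$) returns the prior mode $\hat s^*(0,z)=0$, so that the per-symbol sensing cost is $c(0)=q$ and $c(1)=0$. The sensing constraint $\sum_x P_X(x)c(x)\le D_s$ therefore reduces to $\alpha q \le D_s$, i.e.\ $P_X(0) \le D_s/q$.

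For the capacity expression, since $Y = SX$ is a deterministic function of $(X,S)$, we have $I(X;Y\mid S) = H(Y\mid S)$. Conditioning on $S=0$ pins $Y \equiv 0$ and contributes nothing, while conditioning on $S=1$ gives $Y=X$ and contributes $H_b(P_X(1))=H_b(P_X(0))$. Hence $I(X;Y\mid S) = q\,H_b(P_X(0))$, which is increasing in $P_X(0)$ on $[0,1/2]$, so the maximizer saturates the sensing constraint with $P_X(0)=D_s/q$, yielding $C(D_s)=qH_b(D_s/q)$.

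For the rate-distortion formula I would interpret the source $U$ as a Bernoulli$(p)$ random variable coinciding with the channel output $Y$ (so that $p = P_Y(0) = 1 - q + qP_X(0)$), and expand the definition as $I(U;\hat U\mid S) = H(U\mid S) - H(U\mid\hat U,S)$. The first term reproduces $qH_b(D_s/q)$, matching the leading term of the stated formula. For the second term I would set up a two-parameter test channel in which $\hat U$ is related to $U$ by crossover probability $a_0$ conditional on $S=0$ and $a_1$ conditional on $S=1$, and $\hat S$ is related to $S$ via the symbolwise MAP rule derived above. Translating the joint constraints $\mathbb{E}[d(U,\hat U)] \le D_u$ and $\mathbb{E}[d(S,\hat S)] \le D_s$ into affine inequalities in $(a_0,a_1)$ and then minimising $I(U;\hat U\mid S)$ over the resulting feasible polytope by a KKT argument should produce the two arguments $(D_s-D_u+p)/(2pq)$ and $(D_u+D_s-p)/(2(1-p)q)$ appearing in the corollary; the weights $pq$ and $(1-p)q$ emerge as the joint probabilities $P(\hat U=0,S=1)$ and $P(\hat U=1,S=1)$ under this test channel.

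The main obstacle will be the last step: verifying that the chosen parametric test channel is genuinely optimal (not merely feasible) requires a matching converse, for which I would use a Mrs.~Gerber-style convexity bound on the binary conditional entropy $H(U\mid\hat U,S)$. I would also need to delineate the feasibility region in $(D_u,D_s,p,q)$ in which the two $H_b$ arguments lie in $[0,1]$; outside that region the expression must be clipped to $R=0$ or to the envelope $R(D_u,D_s)=C(D_s)$, where the coding problem degenerates.
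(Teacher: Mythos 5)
Your capacity half is fine and matches the paper: $c(1)=0$, $c(0)=q$, $I(X;Y\mid S)=qH_b(P_X(0))$, and saturating $qP_X(0)\le D_s$ gives $C(D_s)=qH_b(D_s/q)$ (modulo the tacit restriction $D_s\le q/2$, which the paper also leaves implicit). The rate--distortion half, however, is where the corollary actually lives, and there your proposal stops short of a proof: you say a KKT computation over a two-parameter test channel ``should produce'' the arguments $(D_s-D_u+p)/(2pq)$ and $(D_u+D_s-p)/(2(1-p)q)$, and you explicitly defer the matching converse (the Mrs.~Gerber-type bound) as an unresolved obstacle. Neither the optimization nor the converse is carried out, so the closed form for $R(D_u,D_s)$ in \eqref{hh} is conjectured rather than derived. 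The paper gets it by a concrete elimination: with $p$ the source bias, $a=P_{X|U}(0|0)$, $b=P_{X|U}(0|1)$, $\alpha=P_X(0)=pa+(1-p)b$, it writes $R=qH_b(\alpha)-pqH_b(a)-(1-p)qH_b(b)$ (from $I(U;\hat U\mid S)=H(\hat U\mid S)-H(\hat U\mid U,S)$ with weights $P(U=0,S=1)=pq$ and $P(U=1,S=1)=(1-p)q$), together with the linear relations $D_s=\alpha q$ and $D_u=(1-q)p+q(1-p)b+qp(1-a)$; solving these two linear equations gives $a=\frac{D_s-D_u+p}{2pq}$ and $b=\frac{D_u+D_s-p}{2(1-p)q}$, and substitution yields \eqref{hh}. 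To complete your route you would have to actually perform this (or an equivalent) elimination, not assert its outcome.

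A second, more structural problem is your identification $p=P_Y(0)=1-q+qP_X(0)$. Since $Y=SX$ and $q\le \tfrac12$, this forces $p\ge 1-q\ge \tfrac12$ and makes $p$ depend on $D_s$ through $P_X(0)=D_s/q$, contradicting the corollary's premise that $p$ is a fixed constant in $[0,\tfrac12]$. In the derivation that produces \eqref{hh}, $p$ plays the role of the Bernoulli parameter of the source $U$ (the paper's line ``let $P_Y(0)=p$'' is a notational slip), the weights $pq$ and $(1-p)q$ are $P(U=0,S=1)$ and $P(U=1,S=1)$ rather than probabilities involving $\hat U$, and the leading term $qH_b(D_s/q)$ arises as $H(\hat U\mid S)$, not $H(U\mid S)$: with your decomposition $H(U\mid S)-H(U\mid\hat U,S)$ and $U$ independent of $S$, the first term is $H_b(p)$, which matches the stated formula only under the problematic $U=Y$ identification. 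Fixing the role of $p$ and switching to the $H(\hat U\mid S)-H(\hat U\mid U,S)$ decomposition (or doing the honest optimization over test channels) is what your argument is missing.
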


\begin{figure}[t]
\centering
\includegraphics[width=0.5\textwidth]{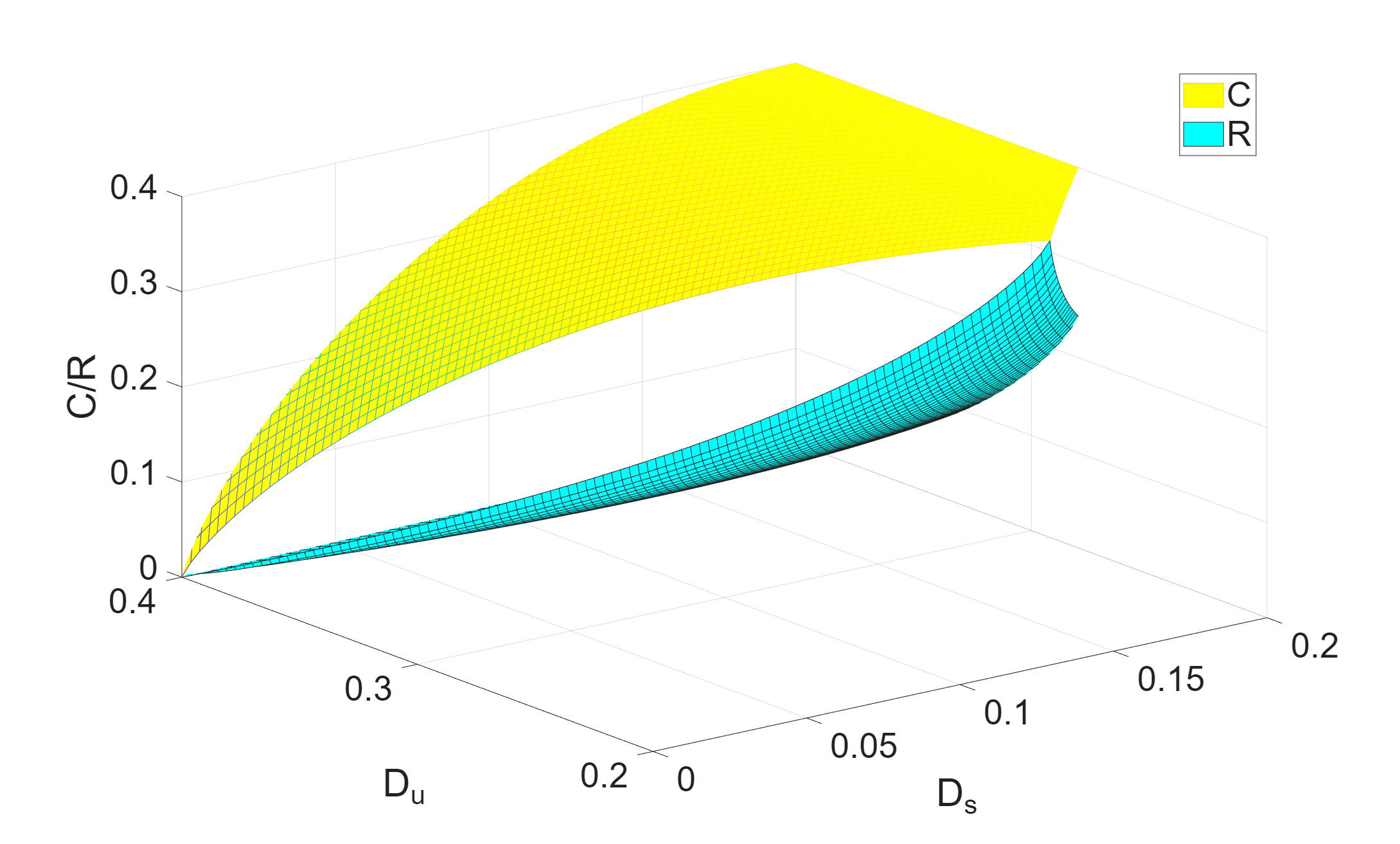}
\caption{Comparison of capacity-distortion and rate-distortion for the binary channel.}
\label{fig:4.1}
\end{figure}

\begin{proof}
    Since $Y$ is deterministic given $(S,X)$, and $Y$ is always 0 when $S=0$, we have $I(X;Y|S) = P_S(1) H(X) = q H_b(\alpha)$. To compute the sensing distortion, note the optimal estimator $\hat{s}^*(x,z)$ from the previous chapter. The expected sensing distortion is evaluated as $D_s = P_X(0) c(0) = \alpha \cdot \min\{q, 1-q\}$. 
    
    To compute the communication distortion, note the optimal estimator $\hat{u}^*(s,y)$ from the previous chapter. The expected communication distortion is evaluated as:
    \begin{equation*}
    D_u = (1-q) \cdot \min\{p, 1-p\} + q \cdot (1-p) \cdot b + q \cdot p \cdot (1-a).
    \end{equation*}
    
    Next, compute the source compression rate. From the chain rule of mutual information $I(U;\hat{U}|S) = H(\hat{U}|S) - H(\hat{U}|U,S)$ where $H(\hat{U}|S) = q \cdot H_b(\alpha)$ in the assumption of $p < \frac{1}{2}$. Thus:
    \begin{equation*}
    I(U;\hat{U}|S) = q \cdot H_b(\alpha) - p \cdot q \cdot H_b(a) - (1-p) \cdot q \cdot H_b(b).
    \end{equation*}
    
    Since $\gamma=1$, we obtain the parametric equations:
    \begin{equation*}
    \begin{cases}
    R = q \cdot H_b(\alpha) - p \cdot q \cdot H_b(a) - (1-p) \cdot q \cdot H_b(b), \\
    C = q \cdot H_b(\alpha). \quad p \in [0,\frac{1}{2}].
    \end{cases}
    \end{equation*}
    After elimination, we have the desired conclusion \eqref{hh}.
\end{proof}

In the capacity-distortion tradeoff and rate-distortion tradeoff illustrated in Fig. \ref{fig:4.1} with $q = 0.4$ and $p = 0.4$, it can be observed that the capacity-distortion tradeoff curve consistently lies above the rate-distortion tradeoff curve. This verifies the converse proof of the theorem proposed, namely that for any tuple satisfying the constraints in \eqref{eq:achievable_conditions}, the inequality $R \leq C$ always holds. Furthermore, the point with coordinates $(0.16, 0.24, 0.3884)$ in the figure represents the intersection of the capacity-distortion tradeoff and the rate-distortion tradeoff, validating the achievability proof of the theorem proposed, i.e., there always exists a codebook that achieves a communication rate equal to the channel capacity. 

\section*{Conclusion}
In this paper, we considered a JSCC problem for the state-dependent ISAC model, subject to both sensing and communication distortions.  We characterized the capacity-distortion-cost tradeoff and proved that the separation of source and channel coding preserves joint optimality in this setting. Finally, through illustrative examples involving binary channels, we characterize the tradeoff between communication and sensing performance, thereby validating our theoretical findings. %We proposed a separate source and channel coding scheme and characterized the capacity-distortion-cost tradeoff.  Theoretically, we prove that separate source and channel coding can achieve the joint optimality.  Through illustrative examples in binary channels, we demonstrate the tradeoff between communication and sensing performance, which verifies the correctness of the proposed theorem. %We characterized the capacity-distortion-cost tradeoff of the model, where we consider both sensing and communication distortion. In particular,

 \begin{appendices}

\section{Proof of Convexity of $R(D_u, D_s)$}\label{AppProofLemma1}

    \setlength{\abovedisplayskip}{3pt} % 减少显示数学环境上的间距
    \setlength{\belowdisplayskip}{3pt} % 减少显示数学环境下的间距
    First, because $R(D_u, D_s)$ is defined as the minimum of the mutual information over the set of conditional distributions that satisfy the distortion constraints $D_u$ and $D_s$, enlarging either $D_u$ or $D_s$ enlarges the feasible set. Hence the minimum cannot increase; i.e., $R(D_u, D_s)$ is non‑increasing in each argument.
    
    To prove convexity, consider two points $(R_1, D_{u1}, D_{s1})$ and $(R_2, D_{u2}, D_{s2})$ that lie on the rate‑distortion surface, and let $(p_1(\hat{u}|u), q_1(\hat{s}|s))$ and $(p_2(\hat{u}|u), q_2(\hat{s}|s))$ be the conditional distributions achieving them, respectively. That is, under $p_1$ and $q_1$ the distortions are $D_{u1}$ and $D_{s1}$ and the mutual information satisfies $I_{p_1, q_1}(U; \hat{U}|S)=R(D_{u1}, D_{s1})$, while under $p_2$ and $q_2$ we have $I_{p_2, q_2}(U; \hat{U}|S)=R(D_{u1}, D_{s1})$.
    
    For any $\lambda \in [0,1]$, construct the convex combination $p_\lambda = \lambda p_1 + (1-\lambda)p_2$, $q_\lambda = \lambda q_1 + (1-\lambda)q_2$. Since the distortion measures are linear in the conditional distribution, the distortions under $p_\lambda$ are
    \begin{align}
        D_u(p_{\lambda}) &= \lambda D_{u1} + (1-\lambda) D_{u2}, \\
        D_s(p_{\lambda}) &= \lambda D_{s1} + (1-\lambda) D_{s2} .
    \end{align}
    Mutual information $I(U; \hat{U}|S)$ is a convex functional of the conditional distribution. Therefore,
    \begin{align}
        I_{p_\lambda, q_\lambda}&(U; \hat{U}|S) \nonumber \\
        &\le \lambda I_{p_1, q_1}(U; \hat{U}|S) + (1-\lambda) I_{p_2, q_2}(U; \hat{U}|S).
    \end{align}
    By definition, $R(D_u, D_s)$ is the infimum of the mutual information over all conditional distributions that meet the distortion constraints $(D_u^\lambda, D_s^\lambda)$. The particular distribution $p_\lambda$ yields exactly those distortions, so we have
    \begin{align}
        R(&D_u(p_{\lambda}), D_s(p_{\lambda})) \nonumber \\
        &\le I_{p_\lambda}(U; \hat{U}|S) \nonumber \\
        &\le \lambda I_{p_1}(U; \hat{U}|S) + (1-\lambda) I_{p_2}(U; \hat{U}|S) \nonumber \\
        &= \lambda R(D_{u1}, D_{s1}) + (1-\lambda) R(D_{u2}, D_{s2}).
    \end{align}
    This shows that $R(D_u, D_s)$ is a convex function of $(D_u, D_s)$.
 
 \end{appendices}
\bibliography{refer.bib}
\bibliographystyle{ieeetr}

\end{document}